\documentclass[10pt]{amsart}

\usepackage{amsmath,amssymb}
\usepackage{epsfig,url}
\usepackage{mathdots}
\usepackage{mathrsfs}
\usepackage{cite}
\usepackage{color}

\usepackage{blindtext}
\usepackage{bm}
\usepackage{booktabs}
\usepackage{multirow}

\usepackage[colorlinks=true,urlcolor=red,citecolor=red]{hyperref}
\usepackage[left=3cm,right=3cm,top=3cm,bottom=3cm]{geometry}

\usepackage{verbatim}
\newtheorem{theorem}{Theorem}[section]

\theoremstyle{definition}

\newtheorem{note}[theorem]{Note}

\theoremstyle{remark}

\newtheorem{Definition}{\bf Definition}[section]

\newtheorem{Example}[Definition]{\bf Example}

\numberwithin{equation}{section}

\newcommand{\paren}[1]{\left(#1\right)}

\newcommand{\ba}{\begin{eqnarray}}
\newcommand{\ea}{\end{eqnarray}}
\newcommand{\ift}{\int_{0}^{\infty}}

\newcommand{\pFq}[5]{\ensuremath{{}_{#1}F_{#2} \left( \genfrac{}{}{0pt}{}{#3}
{#4} \bigg| {#5} \right)}}

\newmuskip\pFqmuskip

\newbox{\ORCIDicon}
\sbox{\ORCIDicon}{\large\includegraphics[width=0.8em]{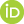}}

\newcommand{\orcid}[1]{\,\href{#1}{\usebox{\ORCIDicon}}\,}

\begin{document}

\title{
New formula for Asymptotic behavior of the Synchrotron function
}

\author[Iv\'{a}n Gonz\'{a}lez]{Ivan Gonzalez$^1$\orcid{https://orcid.org/0009-0002-0731-9519}}
\email{ivan.gonzalez@uv.cl}

\author[Daniel Salinas-Arizmendi]{Daniel Salinas-Arizmendi$^2$\orcid{https://orcid.org/0000-0002-0577-2005}}
\email{daniel.salinas@usm.cl}

\dedicatory{${}^1$Instituto de {F}\'{i}sica y {A}stronom\'{i}a,~{U}niversidad de {V}alpara\'{i}so,~{A}venida~{G}ran~{B}reta\~{n}{a} $1111$,~Valpara\'{i}so,~{C}hile\\
${}^2$Departamento de Física, Universidad Técnica Federico Santa María y Centro Científico Tecnológico de Valparaíso, Casilla 110-V, Valparaíso, Chile}

\date{\today}
\keywords{The Synchrotron function, radiative processes in astrophysics, the modified Bessel function, the Method of Brackets, definite integrals}

\begin{abstract}
Synchrotron radiation plays a central role in astrophysical and high-energy processes. Its spectral description involves the synchrotron function, defined by a non-trivial integral of modified Bessel functions and commonly evaluated through numerical methods or dedicated approximations. In this work, we obtain a compact analytical representation of the synchrotron function using the \textit{Method of Brackets}, which yields systematically controllable asymptotic expansions in both the small- and large-argument regimes. The resulting expressions accurately reproduce numerical integration and make the analytic structure of the function explicit. Our results provide an efficient alternative to repeated numerical evaluations and facilitate applications requiring fast and controlled approximations.
\end{abstract}

\maketitle

\section{Introduction}\label{sec:introduction}

Synchrotron radiation is a fundamental mechanism in astrophysical and high-energy environments, generated by charged particles accelerated to relativistic velocities in the presence of magnetic fields. In standard formulations, several quantities associated with synchrotron emission are expressed in terms of special functions, in particular the modified Bessel functions of the second kind $K_\nu$, which arise naturally within the classical theory of synchrotron radiation \cite{jackson1962,rybicki1979,schwinger1949,westfold1959}.  
Specifically, the synchrotron function can be written as
\begin{equation}\label{eq:integral}
F(x)= x \int_{x}^{\infty} K_{5/3}(z)\ dz,
\end{equation}
which appears in the modeling of synchrotron radiation spectra from a wide variety of sources, including pulsar wind nebulae, gamma-ray bursts, and active galactic nuclei, as well as in studies of relativistic electrons in strong magnetic fields near black holes and in turbulent magnetic field environments \cite{Yan:2023uxd,Chael:2017ahn,Derishev:2019nvj,Fang:2014joa}.

Despite its ubiquity, the integral definition \eqref{eq:integral} is not always the most convenient form for repeated evaluations, as it involves an integrand given by a special function and an improper upper limit. In practical applications, it is common to rely on numerical quadratures, dedicated approximations, or fitted formulas (\emph{fits}) tailored to specific parametrizations \cite{fouka2009,fouka2014analytical,Fouka:2013yi}. These approaches have proven effective in many contexts; however, in large-scale computations or asymptotic studies, it can be advantageous to have analytical representations that make explicit the structure of the $x\ll 1$ and $x\gg 1$ regimes, which are frequently relevant in physical applications.

In this work, we obtain an analytical representation of the synchrotron emission integral \eqref{eq:integral} by applying the \texttt{Method of Brackets} (MoB) \cite{GONZALEZ201050,brackets2-2010}. The MoB is a heuristic yet effective framework for reducing definite integrals to algebraic constraints on summation indices, and it has been successfully applied to a variety of problems involving special functions and definite integrals \cite{Gonzalez:2015msa,Gonzalez:2023jig,Gonzalez:2025elh}. By constructing a bracket series representation for $K_{5/3}$ and performing term-by-term integration within the MoB rules, we derive a compact analytical expression for $F(x)$ and extract systematically improvable asymptotic expansions for both $x\ll 1$ and $x\gg 1$, which are validated against direct numerical integration.

The remainder of the paper is organized as follows. In Section~\ref{sec:formalism}, we summarize the rules of the MoB employed throughout the manuscript. Section~\ref{sec:Bessel} develops the bracket series representations of $K_\nu$ and the divergent large-$x$ form relevant for the asymptotic analysis. In Section~\ref{sec:solution}, we derive an analytical expression for $F(x)$ and discuss its properties. Sections~\ref{sec:small} and~\ref{sec:large} present the small-$x$ and large-$x$ asymptotic regimes, respectively. Finally, Section~\ref{sec:conclusions} contains the conclusions.










\section{The Method of Brackets
}
\label{sec:formalism}

The
MoB
is a heuristic technique, 
powerful due to its simplicity and effectiveness, providing a straightforward approach for integrating terms involving special functions such as modified Bessel functions,
hypergeometric functions, etc. The method is based on transforming the multivariable integration problem into the solution of a linear system of equations, which can be easily solved.

The MoB evaluates multivariable definite integrals of the form:
\begin{equation}
J=\int_{0}^{\infty} \ldots \int_{0}^{\infty} f\left(x_{1}, \ldots, x_{n}\right) d x_{1} \ldots d x_{n},
\end{equation}
and originates in quantum field theory, specifically in the evaluation of loop integrals arising from Feynman diagrams, as an optimization and generalization of the Negative Dimensional Integration Method
\cite{Halliday:1987an,Anastasiou:1999ui,Suzuki:2002ak}.
The introduction of MoB as an integration technique enabled the resolution of more complex Feynman diagrams \cite{Gonzalez:2007ry,Gonzalez:2008uw,Gonzalez:2008uf,Gonzalez:2009nt} and its extension to integrals beyond those associated with Feynman diagrams \cite{gonzalez2014evaluation,GONZALEZ201050,brackets2-2010,Gonzalez:2011nq}. Finally, it is important to note that MoB serves as a multivariable generalization of Ramanujan’s Master Theorem \cite{cite-key}.

\begin{Definition}
The \texttt{Bracket} is the fundamental structure defined as
\begin{equation}
  \ift  x^{\alpha-1} d x \equiv \langle\alpha\rangle.
\end{equation}
\end{Definition}

The assignment to the divergent integral is valid for $\alpha \in \mathbb{R}$. The brackets resummation corresponds to the \texttt{bracket series}, $S(\alpha, \beta)$, which can be expressed as:
\begin{equation}\label{eq:seriebrackets2}
S(\alpha,\beta)=\sum_{n \geq 0} \phi_{n} \mathcal{F}(n)\langle\alpha n+\beta\rangle,
\end{equation}
where $\mathcal{F}(n)$ is a coefficient of the series, $\alpha,\ \beta$ are real parameters and the symbol  
$\phi_{n}$ is defined by
\begin{equation}
 \phi_{n} \equiv \frac{(-1)^{n}}{\Gamma(n+1)},
\end{equation}The following describes the formal rules for operating with these brackets.

\noindent \textbf{R1. Summation rule}. 
The bracket series in Eq.~\eqref{eq:seriebrackets2} is assigned the following value:
\begin{equation}\label{eq:bracketserie}
\sum_{n \geq 0} \phi_{n} \mathcal{F}(n)\langle\alpha n+\beta\rangle=\frac{1}{|\alpha|} \Gamma\left(\frac{\beta}{\alpha}\right) \mathcal{F}\left(-\frac{\beta}{\alpha}\right),
\end{equation}

The proof of the summation rule can be found in Refs. \cite{GONZALEZ201050,brackets2-2010,Gonzalez:2007ry}.

In the general case, the \texttt{multi-dimensional bracket series} version involves $r$ summations and $r$ brackets, and is given by:
\begin{equation}
\begin{aligned}
J = & \sum_{n_1 \geq 0} \cdots \sum_{n_r \geq 0} \phi_{n_1 \ldots n_r} \mathcal{F}\left(n_1, \ldots, n_r\right) \\
& \times \left\langle M_{11} n_1+\cdots+M_{1 r} n_r+c_1\right\rangle \ldots\left\langle M_{r 1} n_1+\cdots+M_{r r} n_r+c_r\right\rangle ,
\end{aligned}
\end{equation}
where $\phi_{n_1 \ldots n_r} = \phi_{n_1} \ldots \phi_{n_r}$ has been defined to simplify the notation.

By applying the summation rule iteratively, the general expression for $J$ is obtained
\begin{equation}\label{eq:sumrule2}
 J=\mathcal{F}\left(n_1^*, \ldots, n_r^*\right) \frac{\Gamma\left(-n_1^*\right) \ldots \Gamma\left(-n_r^*\right)}{|\operatorname{det}(\mathbf{M})|} ,  
\end{equation}
where the set of values $\left\{n_{i}^{*}\right\}(i=1, \ldots, r)$ is the solution to the system generated by setting the bracket arguments to zero, that is
\begin{equation}
\left(\begin{array}{c}
n_{1}^{*} \\
\vdots \\
n_{r}^{*}
\end{array}\right)=\left(\begin{array}{ccc}
M_{11} & \cdots & M_{1 r} \\
\vdots & \ddots & \vdots \\
M_{r 1} & \cdots & M_{r r}
\end{array}\right)^{-1}\left(\begin{array}{c}
-c_{1} \\
\vdots \\
-c_{r}
\end{array}\right)  ,  
\end{equation}
such that $\mathbf{M}$ is the matrix formed by the coefficients $\left\{M_{i j}\right\}(i, j=1, \ldots, r)$
\begin{equation}
 \mathbf{M}=\left(\begin{array}{ccc}
M_{11} & \cdots & M_{1 r} \\
\vdots & \ddots & \vdots \\
M_{r 1} & \cdots & M_{r r}
\end{array}\right)   .
\end{equation}

The value of Eq.~\eqref{eq:sumrule2} is undefined if the matrix $\mathbf{M}$ is not invertible. In cases where the number of summations exceeds the number of brackets, the procedure is detailed in \cite{GONZALEZ201050,brackets2-2010}.

\noindent \textbf{R2. Integration rule.} For the integral
\begin{equation}
    J=\int_{0}^{\infty} f(x) d x,
\end{equation}
the arbitrary function \( f(x) \) admits the following power series expansion
\begin{equation}
    f(x)=\sum_{n \geq 0} \phi_{n} \mathcal{F}(n) x^{\alpha n+\beta-1}.
\end{equation}

To apply the MoB, we will use the series representation of the function $f(x)$ in the integral, taking into account the definition of the brackets and the first rule, yielding
\begin{equation}
\int_{0}^{\infty} f(x) d x= \frac{1}{|\alpha|} \Gamma\left(\frac{\beta}{\alpha}\right) \mathcal{F}\left(-\frac{\beta}{\alpha}\right).
\end{equation}
This result corresponds to the integration rule of the MoB, applicable to a broad class of functions with bracket series representations~\cite{Gonzalez:2025elh}.

\noindent \textbf{R3. The multinomial expansion}. For multinomials, the expansion rule is given by the following expression:
\begin{equation}
\left(A_{1}+\ldots+A_{r}\right)^{ \pm \mu}=\sum_{n_{1}} \ldots \sum_{n_{r}} \phi_{n_{1} \ldots n_{r}}\left(A_{1}\right)^{n_{1}} \ldots\left(A_{r}\right)^{n_{r}} \frac{\left\langle\mp \mu+n_{1}+\ldots+n_{r}\right\rangle}{\Gamma(\mp \mu)}   
\end{equation}

\begin{proof}
To see how this is derived, see \cite{GONZALEZ201050}.
\end{proof}

\textbf{An extension of MoB.} By reducing an integral to its equivalent bracket series, the number of brackets obtained is equal to or smaller than the number of summations. To each bracket series representation, we associate a \texttt{complexity index}, defined as:
\begin{equation}
I_c = \sigma - \delta,
\end{equation}
where $ \sigma $ is the number of summations and $ \delta $ is the number of brackets. For $ I_c > 0 $, there are \hbox{$C_\delta^\sigma=\sigma!/\left[ \delta!\paren{\sigma-\delta}!\right]$} possible combinations to apply the summation rule \textbf{R1}. Each combination generates a term, corresponding to a power series with multiplicity $ I_c $. The solution to the integral is the sum of all the power series that share the same region of convergence.

\begin{Example} A generic integral of the form
\begin{equation}\label{eq:example}
J(A,B) = \ift f\paren{Ax}h\paren{Bx} dx,
\end{equation}
where the integrand can be separated into the following functions
\begin{eqnarray}
 f\paren{Ax} & = & \sum_n \phi_n \mathcal{F}(n) A^n x^n, \\
 h\paren{Bx} & = & \sum_\ell \phi_\ell \mathcal{H}(\ell) B^\ell x^\ell.
\end{eqnarray}
with
$A,\ B$ are parameters and $\mathcal{F}\left(n\right)$, $\mathcal{H}(\ell)$ are a functions of the indices.

Using the MoB rules, the integral in Eq.~\eqref{eq:example} has a bracket series of the form
\begin{equation}
J(A,B) = \sum_{n \geq 0}\sum_{\ell \geq 0} \phi_{n, \ell} \mathcal{F}(n) \mathcal{H}(\ell) A^{n} B^{\ell} \left\langle  n+\ell+1\right\rangle ,
\end{equation}

Applying the summation rule to eliminate the index $n$, we have
\begin{equation}
J_1(A,B)= \frac{1}{A} \sum_{\ell \geq 0} \paren{-1}^{\ell} \mathcal{F}(-\ell-1) 
\mathcal{H}(\ell) \paren{\frac{B}{A}}^{\ell},
\end{equation}
similarly, applying the summation rule to eliminate the index $\ell$, we obtaind a seconds term de la forma
\begin{equation}
J_2(A,B)= \frac{1}{B} \sum_{n \geq 0} (-1)^{n}
\mathcal{F}(n)\mathcal{H}(-n-1) \paren{\frac{A}{B}}^n.
\end{equation}

Both solutions have different regions of convergence, therefore the solution to the integral is:
\begin{equation}
J(A, B)=\left\{\begin{array}{l}
J_1(A, B),\\
J_2(A, B).
\end{array}\right.
\end{equation}

If the resulting series have a finite and nonzero radius of convergence $R$,
then, if $J_{1}(A,B)$ is valid for $\left\vert \frac{B}{A}\right\vert <R$,
it follows that $J_{2}(A,B)$ is valid for $\left\vert \frac{B}{A}%
\right\vert >R$. In this case, $J_{2}(A,B)$ corresponds to the analytic
continuation of $J_{1}(A,B)$.

On the other hand, if $J_{1}(A,B)$ has a radius of convergence $R\rightarrow
\infty $, then $J_{2}(A,B)$ converges only if $A=0$. In this case, the
series $J_{2}(A,B)$ is useful for obtaining asymptotic approximations when $%
\frac{A}{B}\rightarrow 0$.

\end{Example}

\section{\label{sec:Bessel}Brackets series representations of Bessel function}
In this section, we will demonstrate that the MoB method obtains asymptotic expansion for 
\(K_{\nu}(z)\),
yielding compact and analytical representation. These results are validated against well-known numerical methods.

The Bessel differential equation given by
\begin{equation} \label{eq:edo}
 x^{2} \frac{d^{2} y(x)}{d x^{2}}+x \frac{d y(x)}{d x}-\left(x^{2}+\nu^{2}\right) y(x)=0,   
\end{equation}
with singularities at $0$ and $\infty$. The solutions to this equation define the Bessel functions $I_{\nu}(x)$ \cite{abramowitz1968handbook}. If $\nu$ is not an integer, then $I_{\nu}(x)$ and $I_{-\nu}(x)$ are two linearly independent solutions. If $\nu=n$, an integer, it holds that $I_{-n}(x)=I_{n}(x)$. Another solution to Eq.~\eqref{eq:edo} is the modified Bessel function of the second kind, $K_{\nu}(x)$, which can be expressed as:
\begin{equation} \label{eq:Ibessel}
 K_{\nu}(x)=\frac{\pi}{2} \frac{I_{-\nu}(x)-I_{\nu}(x)}{\sin (\nu \pi)} .
\end{equation}

\begin{Definition}\label{def:hipergometric} The generalized hypergeometric function, ${}_pF_q$, is given by
\begin{equation}
f(x)  ={ }_{p} F_{q}\left(\left.\begin{array}{c}
a_{1}, \ldots, a_{p} \\
b_{1}, \ldots, b_{q}
\end{array} \right\rvert\, x\right)  =\sum_{n=0}^{\infty} \frac{\left(a_{1}\right)_{n} \ldots\left(a_{p}\right)_{n}}{\left(b_{1}\right)_{n} \ldots\left(b_{q}\right)_{n}} \frac{x^{n}}{n !},
\end{equation}
where the indexes $\{p, q\} \in(\mathbb{N}+\{0\})$, the parameters $a_{j} \in \mathbb{R}(j=1, \ldots, p)$ y $b_{k} \in \mathbb{R}(k=1, \ldots, q)$, and 
\begin{equation}
(\beta)_{n}=\frac{\Gamma(\beta+n)}{\Gamma(\beta)},
\end{equation}
is known as the symbol of Pochhammer and the convergence conditions for the hypergeometric function are $p>q+1$, $p=q+1$, and $p<q+1$, which correspond to a radius of convergence $R$ equal to $0$, $1$, and $\infty$, respectively.
\end{Definition}

\begin{theorem}\label{teo:seriebracketsBessel}
The Brackets series representations of $K_{\nu}(x)$ Bessel function is given by
\begin{equation}\label{eq:besselseriebracket}
K_{\nu}(x)=2^{\nu} \sum_{n_{1} \geq 0} \sum_{n_{2} \geq 0} \sum_{n_{3} \geq 0} \phi_{n_{1}, \ldots, n_{3}} \frac{x^{\nu+2 n_{3}}}{\Gamma\left(\frac{1}{2}+n_{1}\right) 4^{n_{1}}}\left\langle\nu+\frac{1}{2}+n_{2}+n_{3}\right\rangle\left\langle 2 n_{1}+2 n_{2}+1\right\rangle.
\end{equation}
\end{theorem}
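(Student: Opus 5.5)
The plan is to start from a classical integral representation of $K_\nu$ in which $x$ enters only through a binomial-type factor. The paper's rules can then turn everything into brackets: \textbf{R3} handles the binomial, power series handle the rest, and the bracket definition handles the integral. The natural candidate is Basset's integral (valid for $\operatorname{Re}\nu>-\tfrac12$, $x>0$):
\begin{equation*}
K_{\nu}(x)=\frac{\Gamma\left(\nu+\frac12\right)(2x)^{\nu}}{\sqrt{\pi}}\int_{0}^{\infty}\frac{\cos t}{\left(t^{2}+x^{2}\right)^{\nu+\frac12}}\,dt .
\end{equation*}
The three summation indices in \eqref{eq:besselseriebracket} should come from three expansions. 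The index $n_1$ comes from the series of $\cos t$. The indices $n_2,n_3$ come from the two-term multinomial expansion of $(t^2+x^2)^{-\nu-1/2}$. The bracket $\langle 2n_1+2n_2+1\rangle$ comes from the final $t$-integration.

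\textbf{Step 1 (cosine).} Write $\cos t=\sum_{n_1}\frac{(-1)^{n_1}t^{2n_1}}{(2n_1)!}=\sum_{n_1}\phi_{n_1}\frac{\Gamma(n_1+1)}{\Gamma(2n_1+1)}t^{2n_1}$. Then apply the duplication formula $\Gamma(2n_1+1)=\frac{4^{n_1}}{\sqrt\pi}\Gamma(n_1+1)\Gamma(n_1+\tfrac12)$. This gives $\cos t=\sqrt{\pi}\sum_{n_1}\phi_{n_1}\frac{t^{2n_1}}{4^{n_1}\Gamma(\frac12+n_1)}$, which already produces the factor $\Gamma(\tfrac12+n_1)4^{n_1}$ in the denominator.

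\textbf{Step 2 (binomial).} Apply \textbf{R3} with $\mu=\nu+\tfrac12$:
\begin{equation*}
(t^2+x^2)^{-\nu-\frac12}=\sum_{n_2,n_3}\phi_{n_2,n_3}\,t^{2n_2}x^{2n_3}\frac{\langle \nu+\tfrac12+n_2+n_3\rangle}{\Gamma(\nu+\tfrac12)} .
\end{equation*}

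\textbf{Step 3 (integration).} Collect the powers of $t$ to get $\int_0^\infty t^{2n_1+2n_2}\,dt=\langle 2n_1+2n_2+1\rangle$, by the definition of the bracket. Then multiply the prefactors. The product $\frac{\Gamma(\nu+\frac12)2^\nu x^\nu}{\sqrt\pi}\cdot\sqrt\pi\cdot\frac{1}{\Gamma(\nu+\frac12)}$ collapses to $2^\nu x^\nu$, and combined with $x^{2n_3}$ this gives exactly \eqref{eq:besselseriebracket}.

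The main obstacle is not the algebra but justification: the interchange of sums and integral is purely formal and rests on the MoB rules. To support the result, I would check consistency by evaluating the bracket series with \textbf{R1}. There are three sums and two brackets, so $I_c=1$ and there are three choices of the free index. Solving the $2\times 2$ bracket systems should give three series. Those with free index $n_3$ or $n_2$ should reproduce the ascending series of $\frac{\pi}{2\sin\nu\pi}(I_{-\nu}-I_\nu)$, consistent with \eqref{eq:Ibessel}. The remaining one (free $n_1$) should be the divergent large-$x$ series, with all entries involving $\Gamma$ of negative arguments, which is discarded for convergent evaluation. Finally, the restriction $\operatorname{Re}\nu>-\tfrac12$ from Basset's formula would be removed afterward by analytic continuation in $\nu$, using $K_{-\nu}=K_\nu$. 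This is harmless here, since the application is $\nu=5/3$.
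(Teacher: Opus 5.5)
Your derivation is correct and is essentially the paper's own proof. Both use Basset's integral, the ${}_0F_1$ (duplication-formula) expansion of $\cos t$, rule \textbf{R3} on $(t^2+x^2)^{-\nu-1/2}$, and the bracket $\langle 2n_1+2n_2+1\rangle$ coming from the $t$-integration.

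One slip in your optional consistency check: the index labels are swapped. The choices with free $n_3$ and free $n_1$ give $-\frac{\pi}{2\sin\nu\pi}I_\nu(x)$ and $\frac{\pi}{2\sin\nu\pi}I_{-\nu}(x)$, respectively. Free $n_2$ forces $n_1^*=-n_2-\tfrac12$, which produces a factor $1/\Gamma(-n_2)=0$. So that series in $x^{-\nu-1-2n_2}$ vanishes identically, rather than being a divergent series to discard.
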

This representation will be used later to evaluate the synchrotron function.
\begin{proof}
The function $K_{\nu}(x)$ is a singular function at $x=0$, which has no series representation around $x=0$, but two integral representations of the form:
\begin{equation}\label{eq:besselintegral2}
 K_{\nu}(x)=\frac{\sqrt{\pi}}{\Gamma\left(\nu+\frac{1}{2}\right)}\left(\frac{x}{2}\right)^{\nu} \int_{1}^{\infty} \exp (-x t)\left(t^{2}-1\right)^{\nu-\frac{1}{2}} d t , 
\end{equation}
and
\begin{equation} \label{eq:besselintegral}
 K_{\nu}(x)=\frac{2^{\nu} x^{\nu} \Gamma\left(\nu+\frac{1}{2}\right)}{\sqrt{\pi}} \int_{0}^{\infty} \frac{\cos (t)}{\left(t^{2}+x^{2}\right)^{\nu+\frac{1}{2}}} d t.
\end{equation}
In the integrand of Eq.~\eqref{eq:besselintegral}, the function $\cos(t)$ has a hypergeometric representation given in \cite{abramowitz1968handbook} as:
\begin{equation}
\cos (t)  ={ }_{0} F_{1}\left(\left.\begin{array}{c}
- \\
\frac{1}{2}
\end{array} \right\rvert\,-\frac{1}{4} t^{2}\right)  =\sqrt{\pi} \sum_{n_{1} \geq 0} \phi_{n_{1}} \frac{1}{\Gamma\left(\frac{1}{2}+n_{1}\right) 4^{n_{1}}} t^{2 n_{1}},
\end{equation}
For the denominator of the binomial, we apply 
rule \textbf{R3},
obtaining
\begin{equation}
    \frac{1}{\left(t^{2}+x^{2}\right)^{\nu+\frac{1}{2}}}=\sum_{n_{2} \geq 0} \sum_{n_{3} \geq 0} \phi_{n_{2}} \phi_{n_{3}} t^{2 n_{2}} x^{2 n_{3}} \frac{\left\langle\nu+\frac{1}{2}+n_{2}+n_{3}\right\rangle}{\Gamma\left(\frac{1}{2}+\nu\right)}.
\end{equation}
Replacing the last two expansions in Eq.~\eqref{eq:besselintegral}, we evaluate the integral as indicated in Section~\ref{sec:formalism}, demonstrating that the Bessel function \(K_\nu\) corresponds to the bracket series representation in Eq.~\eqref{eq:besselseriebracket}.
\end{proof}

\subsection{Asymptotic Expansion of $K_{\nu}(x)$ for large $x$}

In this subsection, the MoB will be utilized to obtain the behavior as $x \rightarrow \infty$ of the Bessel function. By making the following change $u = t - 1$, the integral in Eq.~\eqref{eq:besselintegral2} transforms into
 \begin{equation}\label{eq:besselintegral3}
 K_{\nu}(x)=\frac{\sqrt{\pi}}{\Gamma\left(\nu+\frac{1}{2}\right)}\left(\frac{x}{2}\right)^{\nu} \exp (-x) \int_{0}^{\infty} \exp (-x u) u^{\nu-\frac{1}{2}}(2+u)^{\nu-\frac{1}{2}} d u.
 \end{equation}

\begin{note} \label{nota:exp}
The exponential function can be written as
\begin{equation}
\exp (-x u)=\sum_{n_{1} \geq 0} \phi_{n_{1}} x^{n_{1}} u^{n_{1}}.
\end{equation}
\end{note}
\begin{note}\label{nota:polinomio}
The expression $(2+u)^{\nu-\frac{1}{2}}$, using rule \textbf{R3},
can be expressed as:
\begin{equation}
    \sum_{n_{2} \geq 0} \sum_{n_{3} \geq 0} \phi_{n_{2}} \phi_{n_{3}} 2^{n_{2}} u^{n_{3}} \frac{\left\langle\frac{1}{2}-\nu+n_{2}+n_{3}\right\rangle}{\Gamma\left(\frac{1}{2}-\nu\right)}.
\end{equation}
\end{note}
Expressing the Bessel function from Eq.~\eqref{eq:besselintegral3} in the expansions from notes \ref{nota:exp} and \ref{nota:polinomio}, we obtain the following bracket series:

\begin{equation}
 \begin{aligned}
K_{\nu}(x) = & \ \frac{\sqrt{\pi}}{\Gamma\left(\nu+\frac{1}{2}\right) \Gamma\left(\frac{1}{2}-\nu\right)}\left(\frac{x}{2}\right)^{\nu} \exp (-x) \\
& \times \sum_{n_{1} \geq 0} \sum_{n_{2} \geq 0} \sum_{n_{3} \geq 0} \phi_{n_{1}, \ldots, n_{3}} 2^{n_{2}} x^{n_{1}}\left\langle\frac{1}{2}-\nu+n_{2}+n_{3}\right\rangle\left\langle\nu+\frac{1}{2}+n_{1}+n_{3}\right\rangle.
\end{aligned}   
\end{equation}

Note that the number of sums exceeds the number of brackets; therefore, the summation rule will provide us with three expressions, which can be written as
\begin{equation}
T_{1} (x)=\frac{2^{\nu} x^{\nu} \cos (\nu \pi) \exp (-x)}{\sqrt{\pi}} \sum_{n \geq 0} \frac{\Gamma\left(\nu+\frac{1}{2}+n\right) \Gamma(-2 \nu-n)}{n !}(-2 x)^{n} = -\frac{\pi}{2} \frac{I_{\nu}(x)}{\sin (\pi \nu)},
\end{equation}
\begin{equation}
T_{2}  (x)=\frac{2^{-\nu} x^{-\nu} \cos (\nu \pi) \exp (-x)}{\sqrt{\pi}} \sum_{n \geq 0} \frac{\Gamma\left(-\nu+\frac{1}{2}+n\right) \Gamma(2 \nu-n)}{n !}(-2 x)^{n} =\frac{\pi}{2} \frac{I_{-\nu}(x)}{\sin (\pi \nu)},
\end{equation}
where $T_{1}(x)$ and $T_{2}(x)$ are described as confluent hypergeometric functions, and 
\begin{equation}\label{eq:2F0}
\begin{aligned}
T_3 (x) = & \ \frac{\cos (\nu \pi) \exp (-x)}{\sqrt{2 \pi x}} \sum_{n \geq 0} \frac{\Gamma\left(-\nu+\frac{1}{2}+n\right) \Gamma\left(\nu+\frac{1}{2}+n\right)}{n!}\left(-\frac{1}{2 x}\right)^n \\
 = &\  \sqrt{\frac{\pi}{2 x}} \exp (-x)\pFq20{\tfrac{1}{2} -\nu  \quad   \tfrac{1}{2} +\nu } {-}{ -\frac{1}{2x}} .
\end{aligned}
\end{equation}

The solution of the integral in Eq.~\eqref{eq:besselintegral3} is obtained by summing all the terms with the same region of convergence. In this case, it can be observed that:
\begin{equation}
K_{\nu}(x)=\left\{\begin{array}{l}
T_{1}(x)+T_{2}(x) \\
T_{3}(x)
\end{array}\right.    
\end{equation}
The first solution corresponding to arguments \(x^n\) corresponds to the expression in Eq.~\eqref{eq:Ibessel}. The MoB also finds a second solution for argument \(x^{-1}\) through a divergent hypergeometric representation, \textit{i.e.}, a convenient representation for the asymptotic behavior of $x$.
This final representation will be useful later to obtain the approximation of the function $F(x)$ for very large values of $x$.
\begin{theorem} \label{theo:Kbessell}
The divergent representation of $K_{\nu}(x)$ is given by
\begin{equation}
K_{\nu}(x)=\sqrt{\frac{\pi}{2 x}} \exp (-x)\ \pFq20{\frac{1}{2} -\nu \quad \frac{1}{2} +\nu }{-}{-\frac{1}{2x}}.
\end{equation}
\end{theorem}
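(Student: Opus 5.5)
The plan is to read the theorem off the three-index bracket series obtained just above for the representation \eqref{eq:besselintegral3}, by choosing the pair of summation indices that leaves a series in powers of $x^{-1}$.

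\textbf{Step 1: apply the summation rule to the right pair.} The two brackets are $\langle\frac12-\nu+n_2+n_3\rangle$ and $\langle\nu+\frac12+n_1+n_3\rangle$. I would eliminate $n_2$ and $n_3$ and keep $n_1=n$ free. The constraints give
\[
n_3^*=-\nu-\tfrac12-n, \qquad n_2^*=2\nu+n.
\]
The matrix of coefficients of $(n_2,n_3)$ is $\begin{pmatrix}1&1\\0&1\end{pmatrix}$, which has $|\det|=1$. Rule \textbf{R1} in its multidimensional form \eqref{eq:sumrule2} then yields the surviving factors
\[
\Gamma(-n_2^*)\,\Gamma(-n_3^*)\,2^{n_2^*}=\Gamma(-2\nu-n)\,\Gamma\!\left(\nu+\tfrac12+n\right)2^{2\nu+n},
\]
together with $\phi_{n}\,x^{n}$ and the prefactor. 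This is exactly the structure of $T_1$, namely a series in positive powers of $x$.

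\textbf{Step 2: the choice that actually produces $T_3$.} Since Step 1 gives $T_1$, the asymptotic piece must come from eliminating $n_1$ and $n_2$ and keeping $n_3=n$ free. Then
\[
n_1^*=-\nu-\tfrac12-n, \qquad n_2^*=\nu-\tfrac12-n,
\]
again with $|\det|=1$. The general term becomes
\[
\frac{(-1)^n}{n!}\,\Gamma\!\left(\nu+\tfrac12+n\right)\Gamma\!\left(\tfrac12-\nu+n\right)x^{-\nu-\frac12-n}\,2^{\nu-\frac12-n}.
\]
Combining this with the prefactor $\frac{\sqrt\pi}{\Gamma(\nu+\frac12)\Gamma(\frac12-\nu)}\,(x/2)^\nu e^{-x}$, the powers of $x$ and $2$ collapse to $e^{-x}(2x)^{-1/2}\sqrt{\pi}$ times $\sum_n \frac{(\frac12+\nu)_n(\frac12-\nu)_n}{n!}\left(-\frac{1}{2x}\right)^n$. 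The Pochhammer identification uses $\Gamma(a+n)=\Gamma(a)(a)_n$, which cancels the Gamma functions in the prefactor. The result is $\sqrt{\pi/(2x)}\,e^{-x}\,{}_2F_0$, which is the claimed formula.

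Note that the reflection formula is not needed for this normalization once the Gammas cancel directly. The $\cos(\nu\pi)$ written in \eqref{eq:2F0} seems to be an artifact of an intermediate form, and I would check carefully that the constant is exactly $\sqrt{\pi/(2x)}$, as a sanity check against $\nu=\frac12$, where ${}_2F_0$ reduces to $1$ and $K_{1/2}(x)=\sqrt{\pi/(2x)}e^{-x}$ exactly.

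\textbf{Step 3: justification, the main obstacle.} Because $p=2>q+1$, the series has zero radius of convergence. The MoB rule of summing terms sharing a convergence region therefore assigns $T_3$ alone as the representation for large $x$ (the case $A/B\to0$ of the Example in Section~\ref{sec:formalism}), understood as an asymptotic expansion. To make this rigorous rather than heuristic, I would confirm it independently with Watson's lemma applied to \eqref{eq:besselintegral3}. Expanding $(2+u)^{\nu-\frac12}=2^{\nu-\frac12}\sum_k\binom{\nu-\frac12}{k}(u/2)^k$ and integrating term by term against $e^{-xu}u^{\nu-\frac12}$ gives the terms $\Gamma(\nu+\frac12+k)\,x^{-\nu-\frac12-k}$. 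These reproduce the same coefficients, since $\binom{\nu-\frac12}{k}=(-1)^k(\frac12-\nu)_k/k!$. This establishes the identity in the sense of an asymptotic expansion as $x\to\infty$.
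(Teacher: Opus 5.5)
Your proposal is correct and is essentially the paper's argument. From the three-index bracket series of \eqref{eq:besselintegral3}, leaving $n_3$ free and eliminating $n_1,n_2$ (with $|\det\mathbf{M}|=1$) yields exactly the paper's $T_3$, while your Step~1 just recovers $T_1$ along the way. The Watson's-lemma check in Step~3 goes beyond the paper: it turns the heuristic MoB assignment into a genuine asymptotic statement, valid for $\nu>-\tfrac12$ and extended to all real $\nu$ by $K_{-\nu}=K_\nu$ and the $\nu\to-\nu$ symmetry of the ${}_2F_0$.
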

The behavior for large $x$, utilizing the divergent representation, yields a perturbative analytical result for the Bessel function, expressed as follows:
\begin{equation}
\begin{aligned}
K_{\nu}(x) =  \sqrt{\frac{\pi}{2 x}} \exp (-x)\left[1+\frac{4 \nu^{2}-1}{8 x}+\frac{16 \nu^{4}-40 \nu^{2}+9}{128 x^{2}}+ \mathcal{O} \left(\frac{1}{x^3}\right)\right] .
\end{aligned}
\end{equation}
Numerically, in Figure~\ref{fig:seriebessel}, the solid blue line represents the graph of the modified Bessel function \(K_\nu(x)\), while the gray points illustrate the behavior of the asymptotic series derived from the divergent representation obtained via 
%
MoB,
both for \(\nu=5/3\), which is of particular interest in this work. Note that our fit coincides with the modified Bessel function within the presented region.

\begin{figure}
    \centering
    \includegraphics[scale=0.65]{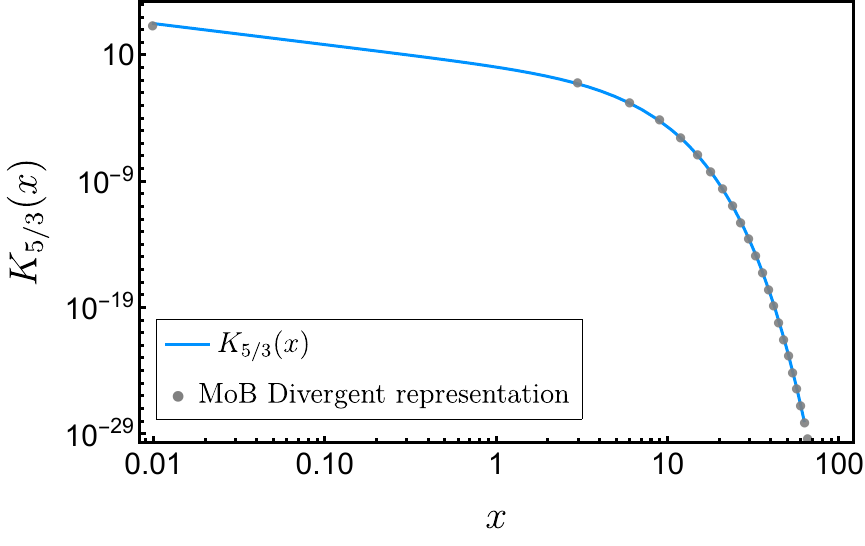}
    \caption{Form of the divergent representation of the modified Bessel function \(K_\nu(x)\) for \(\nu=5/3\), derived from the bracket series. This asymptotic representation demonstrates convergence with numerically calculated values.}
    \label{fig:seriebessel}
\end{figure}

\section{Analytical Solution of the Synchrotron Function}
\label{sec:solution}

In this section we obtain an \emph{analytical} expression for the synchrotron function and its asymptotic limits using MoB.

The synchrotron function, which describes the radiation emitted by charged particles accelerated in magnetic fields, is defined through an integral involving the modified Bessel function. For a particle with charge $e$, energy $E=\gamma mc^{2}$, and velocity $v=\beta c$, traveling in a circular orbit of radius $R$, the energy loss per revolution is particularly significant for high‑energy electrons or positrons, where $\beta\approx1$. In this case, the energy loss per revolution is approximately $\delta E\simeq0.0885\,\text{MeV}/R$. This rapid increase in energy loss strongly depends on the particle’s energy. The energy radiated per revolution into the photon‑energy interval is
\begin{equation}
dI = \frac{4\pi}{3}\,\alpha\,\gamma\,F\!\left(\frac{\omega}{\omega_c}\right)\,d\!\left(\hbar\omega\right),
\end{equation}
where $\alpha$ is the fine‑structure constant, $F(\omega/\omega_{c})$ is the dimensionless \emph{synchrotron function}, and $\omega_{c}$ is the critical frequency, defined as
\begin{equation}
\omega_c = \frac{3\gamma^{3}c}{2R}.
\end{equation}

Next, the normalized synchrotron function $F(y)$ from Eq.~\eqref{eq:integral} has been calculated for an arbitrary order $\nu$ of the modified Bessel function. With the substitution $u = t - x$, the integral can be written as
\begin{equation}
S(x)=\int_x^{\infty} K_\nu(t) d t=\int_0^{\infty} K_\nu(u+x) d u.
\end{equation}

Since theorem
\ref{teo:seriebracketsBessel} and using rule \textbf{R3} for residual binomial $(u+x)$,  the integral $S(x)$ is described by the following brackets series:
\begin{equation}
S(x)= \sum_{n_1} \ldots \sum_{n_5} \phi_{{n_1} \ldots \phi_{n_5}} \frac{2^\nu x^{n_5}}{4^{n_1}} \frac{\left\langle\nu+\frac{1}{2}+n_2+n_3\right\rangle 
\left\langle 2 n_1+2 n_2+1\right\rangle
\left\langle-\nu-2 n_3+n_4+n_5\right\rangle\left\langle n_4+1\right\rangle}
{ \Gamma\left(\frac{1}{2}+n_1\right) \Gamma\left(-\nu-2 n_3\right)}.
\end{equation}

From the previous series, we can construct the following analytical terms for the same region of convergence:
\begin{equation}
S_1(x) = \left(\frac{x}{2} \right)^{1-\nu} \Gamma(\nu-1)  \ \pFq12{\frac{1}{2} -\frac{\nu}{2}}{1 -\nu \quad \frac{3}{2}-\tfrac{\nu}{2}}{\frac{x^2}{4}},
\end{equation}
\begin{equation}
S_2(x) = \left(\frac{x}{2} \right)^{1+\nu} \Gamma(-\nu-1)  \ \pFq12{\frac{1}{2} +\frac{\nu}{2}}{1 +\nu \quad \tfrac{3}{2}+\tfrac{\nu}{2}}{\frac{x^2}{4}},
\end{equation}
and 
\begin{equation}
S_3(x) = \frac{1}{2}\sum_{k =0}^{\infty} \frac{(-1)^k}{k!} \frac{\Gamma\left(\frac{1}{2}+ \frac{\nu}{2}-\frac{k}{2}\right) \Gamma\left(\frac{1}{2}- \frac{\nu}{2}-\frac{k}{2}\right)}{\Gamma\left(1-k\right) } \left( \frac{x}{2}\right)^k,
\end{equation}
where the solution of the integral $S(x)$ corresponds to $S(x)= S_1(x) + S_2(x) + S_3(x)$. It is observed that, of the series $S_3(x)$ only the first term is non-zero, that is
\begin{equation}
S_3(x)=\frac{\Gamma\left(\frac{1}{2} + \frac{\nu}{2}\right)\Gamma\left(\frac{1}{2} -\frac{\nu}{2}\right)}{2}
\end{equation}

At the same time, the solution for the reciprocal argument is given by the following series.
\begin{equation}
S_4(x)=2^\nu x^{-\nu} \sum_{k=0}^{\infty} \frac{\Gamma(\nu+2 k) \Gamma\left(\nu+\frac{1}{2}+k\right) \Gamma\left(\frac{1}{2}+k\right)}{\Gamma(-k) \Gamma(\nu+1+2 k) k!}\left(-\frac{4}{x^2}\right)^k,
\end{equation}
where the series vanishes for all values of the index $k$. 
The contribution of this series is not of interest for the present study, since it belongs to a different region of convergence (argument $\sim x^{-1}$).

\bigskip 

By evaluating the sum $S$ with $\nu=5/3$ and keeping only the physically relevant (nonzero) terms, one obtains the analytic form of the synchrotron function, which can be expressed as
\begin{equation}\label{eq:Hsmall}
F(x) = -\frac{ \pi }{\sqrt{3}} x +  
(4x)^{1/3} \Gamma\left( \frac{2}{3}\right) \ \pFq12{-\frac{1}{3} }{-\frac{2}{3} \quad \frac{2}{3}}{\frac{x^2}{4}}  +\left( \frac{x^{11}}{256}\right)^{1/3}\Gamma \left(-\frac{8}{3}\right)  \ \pFq12{\frac{4}{3}}{\frac{8}{3} \quad \frac{7}{3}}{\frac{x^2}{4}},
\end{equation}
this result is valid for $0 \le x < \infty$. The analytical form obtained through the Method of Brackets shows excellent agreement with the numerical integration of the synchrotron function. As illustrated in Fig.~\ref{fig:Fx}, the MoB solution (gray dots) reproduces the numerical behavior (green line) across the entire range of $x$, confirming the accuracy and robustness of the proposed analytical approach.

\begin{figure}
\centering
\includegraphics[width=0.65\linewidth]{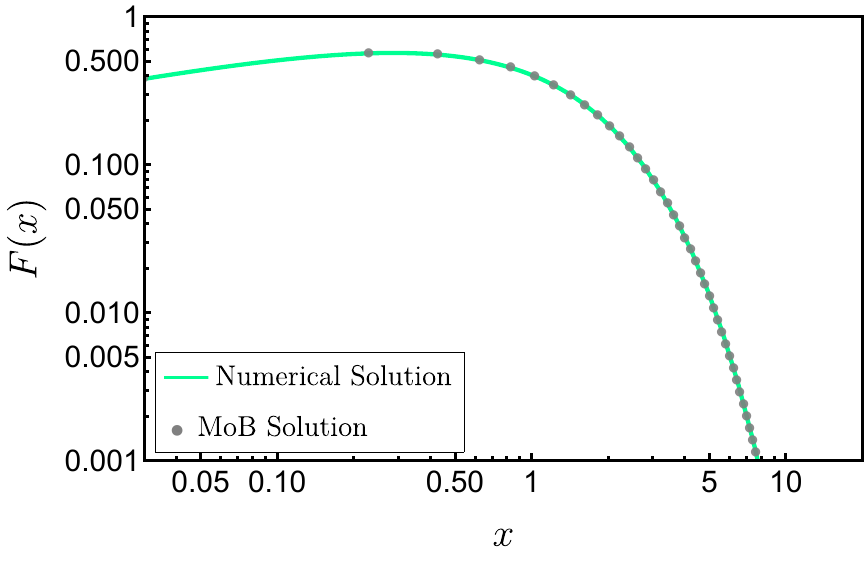}
\caption{Analytical (gray dots) and numerical (green line) results for the synchrotron function $F(x)$ over the entire range of $x$.}
\label{fig:Fx}
\end{figure}

\section{Small-$x$ asymptotics of $F(x)$}
\label{sec:small}

In this section we describe the regime $x \ll 1$ of the synchrotron function $F(x)$, 
establishing its expansion in fractional powers and the hierarchical order of corrections 
that emerge from the local behavior of $K_{5/3}$ near the origin.

The resulting asymptotic expansion of $F(x)$ around $x=0$ can be written explicitly as follows:
\begin{equation}\label{eq:Fsmall}
\begin{aligned}
F(x) 
= & \ 
2^{2/3} \Gamma\left(\frac{2}{3}\right) x^{1/3}
- \frac{\pi}{\sqrt{3}} x 
+ \frac{3 \Gamma\left(\frac{2}{3}\right)}{4^{5/3}} x^{7/3}
+ \mathcal{O}\left( x^{11/3} \right),
\end{aligned}
\end{equation}
this result is obtained directly from Eq.~\eqref{eq:Hsmall}.
This expansion clearly exhibits the non-integer power structure that characterizes the small-$x$ regime of $F(x)$. 

\medskip

To quantify the accuracy of the small-$x$ expansion, 
Table~\ref{tab:smallx} presents the relative error 
$\varepsilon_{\text{rel}}(x,M)$ for representative values of $x$ 
and different truncation orders $M$, where $S_M(x)$ denotes the truncated sum up to a given term. 
As expected for an asymptotic series, the error decreases systematically as more terms are included, 
provided that $x$ remains sufficiently small. 
This analysis confirms that truncation at the $x^{11/3}$ or $x^{13/3}$ term 
already yields sub-percent accuracy for $x \lesssim 10^{-2}$.

\begin{table}[t]
\centering
\begin{tabular}{ccccc}
\toprule
$x$ & Terms $M$ & $S_M(x)$ & $F_{\text{exact}}(x)$ & $\varepsilon_{\text{rel}}(x,M)$ \vspace{0.1cm} \\ 
\midrule
0.001 & 1 & 0.214953 & 0.213139 & $8.51\times 10^{-3}$ \\
0.001 & 2 & 0.213139 & 0.213139 & $1.89\times 10^{-7}$ \\
0.001 & 3 & 0.213139 & 0.213139 & $6.65\times 10^{-12}$ \\
0.001 & 4 & 0.213139 & 0.213139 & $2.84\times 10^{-14}$ \\
0.001 & 5 & 0.213139 & 0.213139 & $3.57\times 10^{-19}$ \\
0.01  & 1 & 0.463102 & 0.444973 & $4.07\times 10^{-2}$ \\
0.01  & 2 & 0.444964 & 0.444973 & $1.95\times 10^{-5}$ \\
0.01  & 3 & 0.444973 & 0.444973 & $1.46\times 10^{-8}$ \\
0.01  & 4 & 0.444973 & 0.444973 & $2.93\times 10^{-10}$ \\
0.01  & 5 & 0.444973 & 0.444973 & $7.84\times 10^{-14}$ \\
0.05  & 1 & 0.791893 & 0.701572 & $1.29\times 10^{-1}$ \\
0.05  & 2 & 0.701203 & 0.701572 & $5.26\times 10^{-4}$ \\
0.05  & 3 & 0.701574 & 0.701572 & $3.25\times 10^{-6}$ \\
0.05  & 4 & 0.701572 & 0.701572 & $1.98\times 10^{-7}$ \\
0.05  & 5 & 0.701572 & 0.701572 & $4.42\times 10^{-10}$ \\
\bottomrule
\end{tabular}
\caption{Relative error of the small-$x$ asymptotic expansion of $F(x)$ 
for several truncations $S_M(x)$.}
\label{tab:smallx}
\end{table}

\medskip

Equation~\eqref{eq:Fsmall} provides the asymptotic expansion of $F(x)$ for $x \ll 1$, 
yielding a compact analytic description of the structure inherited from $K_{5/3}$, 
which does not admit a regular expansion around the origin. 
For practical computations in the small-$x$ regime, it is sufficient to truncate the series at the required order.

\section{Large-$x$ asymptotics of $F(x)$}
\label{sec:large}

An analytical expression for the asymptotic behavior of the synchrotron function $F(x)$ in the large-argument regime, $x \gg 1$, is derived below. The analysis developed in this work provides a systematic procedure that allows obtaining the expansion to any desired order.

To derive the solution in the large-$x$ regime, we employ the divergent representation of the modified Bessel function $K_{\nu}(x)$ (see Theorem~\ref{theo:Kbessell}\footnote{The integration of divergent series representations using MoB can be reviewed in Ref. \cite{gonzalez2017extension}.}) within the synchrotron function $F(x)$. 
This substitution reduces the problem to evaluating the following integral:
\begin{equation}
S(x)=\int_x^{\infty} K_\nu(t) d t=\sqrt{\frac{\pi}{2}} \int_x^{\infty} \frac{e^{-t}}{\sqrt{t}}\ \pFq20{\tfrac{1}{2} -\nu \quad \tfrac{1}{2} +\nu }{-}{-\frac{1}{2t}} \ dt.
\end{equation}

According to Definition~\ref{def:hipergometric}, the previous expression can be rewritten as
\begin{equation}\label{eq:Jlarge}
S(x)=\sqrt{\frac{\pi}{2}}\sum_{n \geq 0}  \phi_n 
\frac{\Gamma\left(\frac{1}{2} + \nu + n\right)\Gamma\left(\frac{1}{2} - \nu + n\right)}{2^n\Gamma\left(\nu + \frac{1}{2}\right)\Gamma\left(\frac{1}{2} - \nu\right)}
\int_x^{\infty} \frac{e^{-t}}{t^{n + \frac{1}{2}}} dt.
\end{equation}

To apply MoB, we first transform the integral in Eq.~\eqref{eq:Jlarge} by introducing the change of variable $u = t - x$. 
This substitution shifts the integration limits to the interval $[0, \infty)$, allowing the use of the Integration Rule~(\textbf{R2}), \textit{i.e.}, 
\begin{equation}
\int_x^{\infty} \frac{e^{-t}}{t^{n + \frac{1}{2}}} dt = \ift \frac{e^{-x-u}}{(x+u)^{n+\frac{1}{2}}} du.
\label{eq:int4}
\end{equation}

Taking into account Note~\ref{nota:exp} and the multinomial expansion~(\textbf{R3}) in the LHS integral of Eq.~\eqref{eq:int4}, the bracket series representation of Eq.~\eqref{eq:Jlarge} takes the form
as follows:
\begin{equation}
S(x) = \sqrt{\frac{\pi}{2}} \sum_{n\geq 0} \sum_{m_1\geq 0}  \sum_{m_2\geq 0}  \sum_{m_3\geq 0} \phi_{n,m_1,m_2,m_3} 
c_{\nu}(n)\ x^{m_3} 
\left\langle m_1+m_2 +1\right\rangle
\left\langle n+m_2+m_3 + \frac{1}{2}\right\rangle,
\label{eq:jLarge1}
\end{equation}
where
\begin{equation}
c_{\nu}(n) =
\frac{\Gamma \left( \frac{1}{2}-\nu+n\right) \Gamma \left( \frac{1}{2}+\nu+n \right)}
{2^{n}\Gamma \left( \frac{1}{2}-\nu
\right) \Gamma \left( \frac{1}{2}+\nu \right) \Gamma\left( \frac{1}{2} + n\right) }.
\end{equation}

The bracket series in Eq.~\eqref{eq:jLarge1} yields five formal independent terms, whose explicit expressions are collected in Appendix~\ref{sec:app1}. 
A convergence analysis (summarized in Table~\ref{tab:brackets-large-x}) shows that, in the large-$x$ regime, only $S_2(x)$ provides the relevant asymptotic representation as $x\to\infty$ (in the sense of optimal truncation), whereas the remaining solutions are either restricted to a different domain of convergence.

In the large-$x$ regime, we set $F(x)=x\ S_2(x)$, this is
\begin{equation}
    F(x) = \  \sqrt{\frac{\pi x}{2}} e^{-x} \left( \sum_{n=0}^{\infty}\sum_{m=0}^{\infty}\frac{\Gamma \left( \tfrac{1}{2}%
+m+n\right) \Gamma \left( \tfrac{1}{2}+\nu +m\right) \Gamma \left( \tfrac{1}{%
2}-\nu +m\right) }{2^{m}  m! \Gamma \left( \frac{1}{2}%
+\nu \right) \Gamma \left( \frac{1}{2}-\nu \right)\Gamma \left( \tfrac{1}{2}+m\right)}\left(- \frac{1}{x}\right) ^{m} \left( -\frac{%
1}{x}\right) ^{n}\right)  .
\end{equation}

For the particular case with $\nu=5/3$ and truncate the resulting double series at order $N$ in each summation index, thereby obtaining the asymptotic approximation.
\begin{equation}
F\left( x\right) =  \sqrt{\frac{x}{\pi}}
e^{-x}
 \left( 
 \sum_{n=0}^{N} \sum_{m=0}^{N} \frac{
 \Gamma\left(\frac{1}{2} +n+m \right) \Gamma\left(-\frac{7}{6}+m \right) \Gamma\left(\frac{13}{6} +m \right)
 }{2^{3/2+m} \Gamma\left(\frac{1}{2}+m \right)}
 \left(- \frac{1}{x} \right)^m \left(- \frac{1}{x} \right)^n 
 \right).
\end{equation}

Equivalently, by expanding the truncated double sum in inverse powers of $x$ and factoring out the dominant exponential prefactor, one obtains the standard large-$x$ asymptotic series, in particular for $N=2$, we obtain 
\begin{equation}
F(x) = \sqrt{\frac{\pi x}{2}} e^{-x}\Bigg[
1+\frac{55}{72} \frac{1}{x}
-\frac{10151}{10368} \frac{1}{x^2}+ \mathcal{O}\left( \frac{1}{x^3}\right)
\Bigg],
\end{equation}
where the truncation order is taken to be of order $N$.

\bigskip

\begin{table}[t]
\centering
\label{tab:brackets-large-x}
\begin{tabular}{c c l}
\toprule
Solution & Argument & Validity region  \\
\midrule
$S_1(x)$  & $x^{n-m}$                    & Valid for $x \to \infty$. \\
$S_2(x)$   & $x^{-n-m}$                   & Valid for $x \to \infty$, optimal truncation. \\
$S_3(x)$  & $x^{n}$                                & Valid for $0 < x < \infty$.\\
$S_4(x)$ & $x^{n}$                      & Valid for $0 < x < \infty$. \\
$S_5(x)$ & $x^{n}$         & Null. \\
\bottomrule
\end{tabular}
\caption{Summary of the independent solutions derived from the bracket series and and their corresponding domains of validity.}
\end{table}

Equations above provide a systematically improvable large-$x$ approximation of $F(x)$ with a controllable remainder, making the evaluation of the synchrotron function efficient in the high-energy tail. 
Together with the small-$x$ expansion derived previously, this completes an analytic description of $F(x)$ in both asymptotic regimes, suitable for practical implementations where repeated numerical quadratures are undesirable.

\section{\label{sec:conclusions}Conclusions}

In this work we derived an analytical representation of the synchrotron function by applying the MoB to its defining integral. Starting from a bracket-series representation of the modified Bessel function, the method reduces the problem to a finite set of formal series solutions whose relevance is determined by their convergence properties. This yields a compact closed-form expression for $F(x)$ and, at the same time, provides a systematic route to asymptotic approximations.

An important outcome is that the MoB approach delivers controlled and systematically improvable expansions in the two regimes of practical interest. In the small-$x$ limit, we obtain the expected fractional-power structure and quantify the accuracy of truncated expansions through explicit error estimates. In the large-$x$ limit, we construct an exponentially suppressed asymptotic series and show that optimal truncation leads to an efficient approximation in the high-energy tail.

Our analytical results were validated against direct numerical integration, showing very good agreement in the representative ranges considered. Overall, the formulas presented here offer a practical alternative to repeated numerical quadratures and help make the analytic structure of the synchrotron kernel explicit. The same strategy can be extended to related synchrotron kernels and to integrals involving $K_\nu$ that appear in radiation and transport calculations.

\section*{Acknowledgments}
\noindent Ivan Gonzalez would like to thank CEFITEV-UV for partial support. 

\appendix

\section{Complete solution for large-$x$ values}
\label{sec:app1}

From the bracket series of Eq.~\eqref{eq:jLarge1}, one finds the following solutions:
\begin{equation}
\begin{aligned}
S_1(x) = & \  \sqrt{\frac{\pi x}{2}} e^{-x} \sum_{n=0}^{\infty} \sum_{m=0}^{\infty} 
\frac{c_{\nu}(m)}{m!} \Gamma\left(-\frac{1}{2} -n+m\right)  \left(- \frac{1}{x} \right)^m \left(- x \right)^n ,\\
S_2(x) = &  \  \sqrt{\frac{\pi }{2x}} e^{-x} \sum_{n=0}^{\infty} \sum_{m=0}^{\infty} \frac{ c_{\nu}(m)}{m!}
\Gamma\left(\frac{1}{2} + n+m\right)  \left(- \frac{1}{x} \right)^m \left(- \frac{1}{x} \right)^n, \\
S_3(x) = &  \  \sqrt{\frac{\pi }{2}} e^{-x} \sum_{n=0}^{\infty} \sum_{m=0}^{\infty} \frac{(-1)^{m} c_{\nu}(m)}{n! m!}
\Gamma\left(\frac{1}{2} + n+m\right) \Gamma\left(\frac{1}{2} - n-m\right) \left(-x \right)^n, \\
S_4(x) = &  \  \frac{\sqrt{\pi}}{2} e^{-x} \sum_{n=0}^{\infty} \sum_{m=0}^{\infty}
\frac{(-1)^{m} }{2^{m}n! } 
\frac{\Gamma\left(-\frac{1}{2} +n-m\right) \Gamma\left(1+\nu-n+m\right)\Gamma\left(1-\nu-n+m\right)}
{ \Gamma\left(1-n+m\right) \Gamma\left(\frac{1}{2} -\nu\right) \Gamma\left(\frac{1}{2}+\nu\right)  } \left(-2x\right)^{n},\\
S_5(x) = &  \  \sqrt{\pi} e^{-x} \sum_{n=0}^{\infty} \sum_{m=0}^{\infty}
\frac{(-2)^{m} }{n! } 
\frac{\Gamma \left( \tfrac{1}{2}%
+m+n\right) \Gamma (\nu -m-n)\Gamma (-\nu -m-n)}{\Gamma
(-m-n) \Gamma \left( \frac{1}{2}-\nu
\right) \Gamma \left( \frac{1}{2}+\nu \right) } \left(-2x \right)^n.
\end{aligned}
\end{equation}

The convergence analysis shows that $S_1$, $S_3$ and $S_4$ correspond to the same convergence region. While the term $S_2$ involves two divergences summations that both define asymptotic expansions around $x \to \infty$, it therefore provides the optimal approximation in the large-$x$ regime.

\bibliographystyle{utphys} 
\bibliography{references.bib} 

\end{document}